\def\p{\partial}
\newtheorem{prop}{Proposition}
\theoremstyle{remark}
\newtheorem*{rem}{Remark}
\newtheorem{Rem}{Remark}
\newcommand{\dbar}{\bar{\partial}}
\newcommand{\wt}{\widetilde}
\newcommand{\be}{\begin{equation}}
\newcommand{\ee}{\end{equation}}
\newcommand{\bea}{\begin{eqnarray}}
\newcommand{\eea}{\end{eqnarray}}
\newcommand{\beaa}{\begin{eqnarray*}}
\newcommand{\eeaa}{\end{eqnarray*}}
\renewcommand{\d}{\mathrm{d}}
\newcommand{\pf}{\operatorname{pf}}
\begin{document}
\author[1]{L.V. Bogdanov \thanks{leonid@itp.ac.ru}}
\author[2]{B.G. Konopelchenko}
\affil[1]{Landau ITP RAS, Moscow, Russia}
\affil[2]{
INFN, Sezione di Lecce,
73100 Lecce, Italy
}
\title{Integrability properties of symmetric 4+4-dimensional heavenly type equation}
\date{}
\maketitle
\begin{abstract}
We demonstrate that the dispersionless $\dbar$-dressing method developed before for general heavenly equation is applicable to the 4+4 and $2N+2N$ - dimensional 
symmetric heavenly type equations.
We introduce generating relation and 
derive the two-form defining the potential and equation for it.
We develop the dressing scheme, calculate a class of 
special solutions and demonstrate that reduction
from 4+4-dimensional  equation to four-dimensional
general heavenly equation can be effectively performed
on the level of the dressing data. 
We consider also the extension of the proposed scheme to
$2N+2N$-dimensional case.
\end{abstract}
\section{Introduction}
A simple and symmetric integrable 4+4-dimensional 
(TED) equation
\begin{multline}
(\Theta_{x^1y^2} - \Theta_{x^2y^1})(\Theta_{x^3y^4} - \Theta_{x^4y^3})
\\  
+ (\Theta_{x^2y^3} - \Theta_{x^3y^2})(\Theta_{x^1y^4} - \Theta_{x^4y^1})
\\
+ (\Theta_{x^3y^1} - \Theta_{x^1y^3})(\Theta_{x^2y^4} - \Theta_{x^4y^2}) = 0
\label{TED0}
\end{multline}
introduced in \cite{KS} represents a natural
generalisation of the four-dimensional general heavenly
equation \cite{Schief96}, \cite{DF10}, \cite{LVB15}.
Equation (\ref{TED0}) posesses interesting interpretations
in terms of differential geometry of K\"ahler
spaces \cite{KS}.
Lax pair for equation (\ref{TED0}) reads
\begin{gather}
\begin{split}
&(\Theta_{x^1y^2} - \Theta_{x^2y^1})D_3\Psi
+
(\Theta_{x^2y^3} - \Theta_{x^3y^2})D_1\Psi
+
(\Theta_{x^3y^1} - \Theta_{x^1y^3})D_2\Psi
=0,
\\
&(\Theta_{x^1y^2} - \Theta_{x^2y^1})D_4\Psi
+
(\Theta_{x^2y^4} - \Theta_{x^4y^2})D_1\Psi
+
(\Theta_{x^4y^1} - \Theta_{x^1y^4})D_1\Psi
=0,
\end{split}
\label{LaxTed0}
\end{gather}
where $D_i:=\p_{y^i}-\lambda\p_{x^i}$.
The existence of potential $\Theta$ providing
given representation
of coefficients of vector fields in the Lax pair  corresponds to vanishing divergence condition for
vector fields.

It is possible to consider
two-dimensional involutive distribution 
corresponding to
the Lax pair. A symmetric set of 
divergence free vector fields of
the form
\be
V_{ijk}=
(\Theta_{x^iy^j} - \Theta_{x^jy^i})D_k
+
(\Theta_{x^jy^k} - \Theta_{x^ky^j})D_i
+
(\Theta_{x^ky^i} - \Theta_{x^iy^k})D_j,
\label{symLax}
\ee
where $i,j,k$ is an arbitrary substitution of (distinct)
values 1,2,3,4, belongs to this distribution.

The goal of the present paper is to apply the 
techique of integrable dispersionless hierarchies
\cite{BK05}, \cite{BDM}, \cite{LVB09}, \cite{BK14}
to equation (\ref{TED0}), to introduce a dressing scheme
and construct a class of special solutions of equation
(\ref{TED0}) and its $2N+2N$ -dimensional generalisation.
The dressing scheme and generating equations developed 
in this paper are closely related to that of multidimensional
generalisation of six-dimensional heavenly equation
hierarchy with four degenerate wave functions
introduced in \cite{BP19}, and equation (\ref{TED0})
is obtained for a special choice of the set of times
(see also \cite{BP17}). The dressing scheme 
for equation (\ref{TED0}) corresponds
to the reduction of the dressing scheme for general
eight-dimensional integrable dispersionless hierarchy
\cite{LVB09}, and the functional freedom 
of the dressing data consists of functions of 
seven variables in accordance with eight-dimensional
integrability of equation (\ref{TED0}).

The paper is organized as follows. The applicability of 
techique of integrable dispersionless hierarchies
to equation (\ref{TED0}) is demonstrated in section  
\ref{SecInt}.
The dressing scheme is developed and exact solutions are constructed in section \ref{SecDress}. 
The $2N+2N$-dimensional TED equation and its solutions are considered in section \ref{SecN}.
\section{Integrability properties of equation (\ref{TED0})}
\label{SecInt}
\subsection*{Wave functions}
The structure of wave function 
is defined by 
linear problems (\ref{LaxTed0}).  
The set of wave functions contains four trivial wave functions
\bea
\phi^1=x^1+\lambda y^1, 
\quad \phi^2=x^2+\lambda y^2,
\quad \phi^3=x^3+\lambda y^3,
\quad \phi^4=x^4+\lambda y^4
\label{phi}
\eea
The presence of degenerate wave functions makes the
situation similar to six-dimensional heavenly equation,
where we have two functions exactly of type (\ref{phi})
\cite{BP19}.

To complete a basic set, which is
six-dimensional for integrable distribution 
with the basis (\ref{LaxTed0}),
we also introduce
two generic wave functions
\bea
\begin{aligned}
\Psi^1&=q + \wt\Psi^1,\quad
\wt\Psi^1=
\sum_{n=1}^\infty \Psi^1_n(p,q,\mathbf{x},\mathbf{y})
\lambda^{-n},
\quad u:=\Psi^1_1
\\
\Psi^2&=p + \wt\Psi^2,\quad
\wt\Psi^1=\sum_{n=1}^\infty \Psi^2_n(p,q,\mathbf{x},\mathbf{y})
\lambda^{-n},
\quad v:=\Psi^2_1
\end{aligned}
\label{Psi}
\eea
Till some moment we will consider
$p$, $q$ just as parameters (constants) not entering the equations, and later we will use them as variables in 
description of the
general framework of the hierarchy.
\subsection*{Generating relation}
The structure of wave functions for the Lax pair
(\ref{LaxTed0}) exactly corresponds to multidimensional
extension of six-dimensional heavenly equation
hierarchy
considered in \cite{BP19}, so we will briefly remind
some general results.
Integrable (involutive) distribution corresponding
to the set of wave functions (\ref{phi}),
(\ref{Psi}) (which represents a reduction of the wave
functions for the general hierarchy \cite{LVB09})
can be defined through the differential six-form
\bea
\Omega=\left((d\Psi ^{1}\wedge d\Psi ^{2})
\wedge (d\phi^{1}
\wedge d\phi
^{2}\wedge d\phi ^{4}
\wedge d\phi ^{4})\right),
\label{Omega}
\eea
where the differentials are taken with respect to
independent variables $\mathbf{x}$, $\mathbf{y}$,
imposing the generating relation
\bea
\Omega_-=\left((d\Psi ^{1}\wedge d\Psi ^{2})
\wedge (d\phi^{1}
\wedge d\phi
^{2}\wedge d\phi ^{4}
\wedge d\phi ^{4})\right)_-=0,
\label{Gen}
\eea
meaning that the projection of $\Omega$ to negative powers of $\lambda$
equals zero, thus $\Omega$ in our case is polynomial
(may be meromorphic in a more general setting).
The form (\ref{Omega}) satisfying relation (\ref{Gen})
defines an integrable distribution with
polynomial coefficients corresponding to
volume-preserving case (the basic vector fields cab
be chosen divergence-free).
The functions
$\phi$, $\Psi$ are wave functions for this distribution.

There are different ways to derive the basis of the 
distribution and compatibility conditions
(equations of the hierarchy) using generating equation
(\ref{Gen}). First we will give a direct derivation 
of the basic vectors
(\ref{symLax}) in the spirit of the dressing method.
Then we will modify the generating relation to provide
a simple and elegant way of direct derivation of 
equation (\ref{TED0}) using the language of differential tw0-forms. We will also introduce extra variables $p$, $q$ 
and give the interpretation of equation (\ref{TED0}) 
as a kind
of superposition principle for a set of
six-dimensional heavenly equations.
\subsection*{Lax pair -
direct derivation}
The set of wave functions defines a two-dimensional
involutive distribution annulating them, vector fields 
in this case can be taken in the form 
$$
{V}=\sum V_i D_i,\quad D_i:=\p_{y^i}-\lambda\p_{x^i},
$$
where $V_i$ are in general polynomial in $\lambda$.
Generating relation (\ref{Gen}) implies an
important property
\bea
\begin{pmatrix}
({V}\Psi^1)_+
\\
({V}\Psi^2)_+
\end{pmatrix}=0
\Rightarrow 
{V}
\begin{pmatrix}
\Psi^1
\\
\Psi^2
\end{pmatrix}=0,
\eea
allowing to construct polynomial vector fields
belonging to the integrable distribution explicitly,
eliminating `singular terms' in the result of the action
of vector fields on the wave functions.

Using this property and relations (compare (\ref{Psi}))
\beaa
(D_i \Psi^1)_+=-\p_{x^i} u, \quad 
(D_i \Psi^2)_+=-\p_{x^i} v,
\eeaa
we construct vector fields belonging to the distribution,
\begin{multline}
{V}_{ijk}=
((\p_{x^i} u)(\p_{x^j} v)- (\p_{x^j} u)(\p_{x^i} v))
D_k 
\\
+ 
((\p_{x^j} u)(\p_{x^k} v)- (\p_{x^k} u)(\p_{x^j} v))
D_i
\\
+
((\p_{x^k} u)(\p_{x^i} v)- (\p_{x^i} u)(\p_{x^k} v))
D_j  ,
\label{vf}
\end{multline}
where $i,j,k$ is an arbitrary substitution of
(distinct) values
$1,2,3,4$. 
The vanishing divergence
(or anti-self-adjointness) condition for vector fields
(\ref{vf})
implied by generating relation (\ref{Gen}),
\begin{multline*}
D_k((\p_{x^i} u)(\p_{x^j} v)- (\p_{x^j} u)(\p_{x^i} v))
\\
+
D_i((\p_{x^j} u)(\p_{x^k} v)- (\p_{x^k} u)(\p_{x^j} v))
\\
+
D_j((\p_{x^k} u)(\p_{x^i} v)- (\p_{x^i} u)(\p_{x^k} v)) 
=0,
\end{multline*}
leads to the existence of potential $\Theta$,
$$
((\p_{x^i} u)(\p_{x^j} v)- (\p_{x^j} u)(\p_{x^i} v))=
\Theta_{x^i y^j} - \Theta_{x^j y^i},
$$
and vector fields (\ref{vf}) take exactly the form
(\ref{symLax}), 
$$
V_{ijk}=
(\Theta_{x^iy^j} - \Theta_{x^jy^i})D_k
+
(\Theta_{x^jy^k} - \Theta_{x^ky^j})D_i
+
(\Theta_{x^ky^i} - \Theta_{x^iy^k})D_j,
$$
corresponding to two-dimensional integrable
distribution connected with equation (\ref{TED0}),
the basis (Lax pair) is given by an arbitrary pair of distinct
vector fields $V_{ijk}$.
\subsection*{Second form of generating relation}
Another form of generating relation, more suitable for 
vector fields of the form
$
{V}=\sum V_i D_i
$
is
\bea
(d^{w}\Psi^1\wedge d^{w}\Psi^2)_-=0,
\label{secondgen}
\eea
where we use the notations
\beaa
&&
w^i=y^i-\lambda^{-1}x^i,\quad
\wt w^i=y^i+\lambda^{-1}x^i=\lambda^{-1}\phi^i
\\
&&
\partial_{w^i}=\tfrac{1}{2}(\p_{y^i}-\lambda\p_{x^i})=
2D_i,
\quad
\partial_{\wt w^i}=\tfrac{1}{2}(\p_{y^i}+\lambda\p_{x^i}),
\\&&
d^{w}+d^{\wt w}= d,\quad 
d{\phi^i}\wedge d{w^i}=2 dx_i\wedge d y_i,
\eeaa
$d^{w}$ and $d^{\wt w}$ are differentials 
with respect to the subsets of variables $w_i$ and $\wt w_i$,
and the projection for 
the 2-form
\bea
\omega:=d^{w}\Psi^1\wedge d^{w}\Psi^2
=\sum_{n} \omega_{ij}^{(n)}(\mathbf{x},\mathbf{y})
\lambda^n 
d w^i \wedge d w^j
\label{omega}
\eea
is understood as
\bea
\omega_-=\sum_{n<0} \omega_{ij}^{(n)}(\mathbf{x},\mathbf{y})
\lambda^n
dw^i\wedge dw^j,
\label{projection2}
\eea
where the coefficients $\omega_{ij}^{(n)}$ do not depend on
$\lambda$.
\begin{prop}
Generating relation (\ref{Gen}) 
\beaa
\left((d\Psi ^{1}\wedge d\Psi ^{2})
\wedge (d\phi^{1}
\wedge d\phi
^{2}\wedge d\phi ^{4}
\wedge d\phi ^{4})\right) _{-}=0,
\eeaa
considered for the set of independent variables
$x_i$, $y_i$, is equivalent to the relation
(\ref{secondgen})
\beaa
(d^{w}\Psi^1\wedge d^{w}\Psi^2)_-=0,
\eeaa
where the projection in the second relation is defined
by  (\ref{projection2}).
\end{prop}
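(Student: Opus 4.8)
The plan is to show that the two generating relations, which involve wedge products of different degrees (a six-form in $\Omega$ versus a two-form in $\omega$), carry the same content by exploiting the special product structure of $\Omega$. The key observation is that the four trivial wave functions $\phi^i = x^i + \lambda y^i$ are linear, so their differentials $d\phi^i = dx^i + \lambda\,dy^i$ are explicit and $\lambda$-linear; wedging $d\Psi^1 \wedge d\Psi^2$ with the fixed four-form $d\phi^1 \wedge d\phi^2 \wedge d\phi^3 \wedge d\phi^4$ amounts to projecting the two-form onto a complementary plane in each $(dx^i, dy^i)$ coordinate pair. So the first thing I would do is rewrite both sides in the mixed basis $\{d\phi^i, dw^i\}$ rather than $\{dx^i, dy^i\}$, using the identity $d\phi^i \wedge dw^i = 2\,dx^i \wedge dy^i$ supplied in the excerpt.

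The central computational step is to change basis in the expression for $d\Psi^a$. Since $\{d\phi^i, dw^i\}$ spans the same cotangent space as $\{dx^i, dy^i\}$ (each pair being a nondegenerate linear combination, as $d\phi^i$ and $dw^i$ are independent for $\lambda \neq 0$), I would expand
\beaa
d\Psi^a = d^w\Psi^a + d^{\wt w}\Psi^a,
\eeaa
splitting each differential into its $dw^i$ part and its $d\phi^i$ part (recall $\wt w^i = \lambda^{-1}\phi^i$, so $d^{\wt w}$ is the component along the $d\phi^i$). When I wedge $d\Psi^1 \wedge d\Psi^2$ with $d\phi^1 \wedge \dots \wedge d\phi^4$, every term containing a $d\phi^i$ factor coming from the $d^{\wt w}\Psi^a$ part is annihilated, because that $d\phi^i$ already appears in the fixed four-form. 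Hence only the pure $d^w\Psi^1 \wedge d^w\Psi^2$ piece survives:
\beaa
(d\Psi^1 \wedge d\Psi^2)\wedge(d\phi^1\wedge d\phi^2\wedge d\phi^3\wedge d\phi^4)
=(d^w\Psi^1\wedge d^w\Psi^2)\wedge(d\phi^1\wedge d\phi^2\wedge d\phi^3\wedge d\phi^4).
\eeaa
This collapses the six-form $\Omega$ into the two-form $\omega$ times the fixed, invertible four-form. (I note that the excerpt writes $d\phi^4 \wedge d\phi^4$, evidently a typo for $d\phi^3 \wedge d\phi^4$; I would use the latter so the four-form is nonzero.)

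The remaining and most delicate step is to verify that the projection $(\cdot)_-$ onto negative powers of $\lambda$ commutes, under this identification, between the two formulations. Here the subtlety is that the four-form $d\phi^1\wedge \dots \wedge d\phi^4 = \prod_i(dx^i + \lambda\,dy^i)$ is itself a polynomial in $\lambda$ of degree four, so multiplying $\omega$ by it shifts the grading. I would argue that because the four-form has a fixed, $\lambda$-independent top coefficient (the $\lambda^4$ term $dy^1\wedge\dots\wedge dy^4$) and is built from the $d\phi^i$ which span exactly the directions projected out in the definition of $\omega_-$, the vanishing of $\Omega_-$ in the $\{dx,dy\}$-grading is equivalent to the vanishing of $\omega_-$ in the $\{dw\}$-grading defined by (\ref{projection2}). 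The cleanest way to make this rigorous is to pass to the $\{d\phi^i, dw^i\}$ basis throughout: in that basis $\Omega = \omega \wedge (d\phi^1\wedge\dots\wedge d\phi^4)$ is literally $\omega$ tensored with a single fixed top-degree covector in the $\phi$-directions, so projecting $\Omega$ onto negative $\lambda$-powers acts only on the $\omega$ factor, and the two notions of $(\cdot)_-$ coincide coefficient by coefficient. The main obstacle I anticipate is precisely this bookkeeping of how the $\lambda$-grading interacts with the basis change, since $dx^i$ and $dy^i$ each mix several powers of $\lambda$ when re-expressed through $d\phi^i$ and $dw^i$; handling that carefully, rather than the wedge-product collapse itself, is where the real work lies.
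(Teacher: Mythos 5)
Your first step is exactly the paper's: using $d=d^{w}+d^{\wt w}$ and $\wt w^i=\lambda^{-1}\phi^i$, every $d^{\wt w}$ contribution is annihilated by the four-form, so $\Omega=\omega\wedge\Phi$ with $\Phi:=d\phi^{1}\wedge d\phi^{2}\wedge d\phi^{3}\wedge d\phi^{4}$. The gap is in the second step, which you correctly flag as ``where the real work lies'' but then do not actually close: both arguments you sketch for comparing the two projections fail. The claim that in the mixed basis ``the two notions of $(\cdot)_-$ coincide coefficient by coefficient'' is false, because $\Omega_-$ in (\ref{Gen}) is defined through the $\lambda$-independent basis $dx^i,dy^i$, whereas (\ref{projection2}) uses the $\lambda$-dependent basis forms $dw^i\wedge dw^j$, and projection onto negative powers does not commute with a $\lambda$-dependent change of basis. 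Concretely, take $\omega=\lambda^{-1}a\,dw^i\wedge dw^j$ with $a$ independent of $\lambda$, so that $\omega_-=\omega$; then, with $(k,l)$ complementary to $(i,j)$,
\beaa
\omega\wedge\Phi&=&\pm4\lambda^{-1}a\;dy^i\wedge dx^i\wedge dy^j\wedge dx^j
\\
&&\wedge\,\bigl(dx^k\wedge dx^l+\lambda(dx^k\wedge dy^l+dy^k\wedge dx^l)+\lambda^{2}dy^k\wedge dy^l\bigr),
\eeaa
whose $dx,dy$-coefficients are $\pm4\lambda^{-1}a$, $\pm4a$, $\pm4\lambda a$; only the first is negative, so $(\omega\wedge\Phi)_-\neq\omega_-\wedge\Phi$ and the projections genuinely differ as operators (only their vanishing conditions agree, which is what must be proved). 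Your other heuristic, via the top coefficient $\lambda^{4}dy^{1}\wedge\dots\wedge dy^{4}$ of $\Phi$, points in the wrong direction: since $dw^i\wedge dw^j\wedge\lambda^{4}\,dy^{1}\wedge\dots\wedge dy^{4}=\lambda^{2}\,dx^i\wedge dx^j\wedge dy^{1}\wedge\dots\wedge dy^{4}$, matching against that part of $\Phi$ only yields $\bigl(\lambda^{2}\sum_n\lambda^{n}\omega^{(n)}_{ij}\bigr)_-=0$, i.e.\ $\omega^{(n)}_{ij}=0$ for $n<-2$, strictly weaker than $\omega_-=0$.

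What closes the argument --- and what the paper's hint $dw^i\wedge d\phi^i=2\,dy^i\wedge dx^i$ is for --- is the \emph{bottom} ($\lambda^{0}$) coefficient. For each pair $(i,j)$ one has
\beaa
&&
dw^i\wedge dw^j\wedge\Phi=\pm4\,dy^i\wedge dx^i\wedge dy^j\wedge dx^j\wedge d\phi^k\wedge d\phi^l,
\eeaa
and six-forms coming from distinct pairs $(i,j)$ involve disjoint sets of $dx,dy$-basis six-forms (the pair is read off as the set of indices appearing twice). Expanding $d\phi^k\wedge d\phi^l$ as above, the coefficients of $\Omega$ attached to the pair $(i,j)$ are $c_{ij}$, $\lambda c_{ij}$ (twice) and $\lambda^{2}c_{ij}$, where $c_{ij}(\lambda)=\pm4\sum_n\lambda^{n}\omega^{(n)}_{ij}$. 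Hence $\Omega_-=0$ iff $(c_{ij})_-=0$ for every pair (the conditions coming from $\lambda c_{ij}$ and $\lambda^{2}c_{ij}$ are then automatic), i.e.\ iff $\omega^{(n)}_{ij}=0$ for all $n<0$, which is precisely $\omega_-=0$ in the sense of (\ref{projection2}). So your structural outline matches the paper's proof, but the grading comparison you deferred is the actual content of the Proposition, and neither of your two sketches of it survives being made precise.
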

\begin{proof}
First, using the relations $d=d^{w}+d^{\wt w}$ and
$\wt w^i=\lambda^{-1}\phi^i$,  we get 
\beaa
&&
d\Psi^{1}\wedge d\Psi^{2}
\wedge (d\phi^{1}
\wedge d\phi
^{2}\wedge d\phi ^{4}
\wedge d\phi ^{4})
\\&&\qquad
=
d^{w}\Psi^1\wedge d^{w}\Psi^2\wedge (d\phi^{1}
\wedge d\phi
^{2}\wedge d\phi ^{4}
\wedge d\phi ^{4}).
\eeaa
Having in mind expression (\ref{omega}) and taking
into account that
\beaa
dw^i\wedge\d\phi^i=2 d y^i\wedge d x^i,
\eeaa
it is easy to check necessary and sufficient conditions 
of the Proposition.
\end{proof}

Then we have
\beaa
\omega=d^{w}\Psi^1\wedge d^{w}\Psi^2
=
(d^{w}\Psi^1\wedge d^{w}\Psi^2)_+=
\omega_{ij}d w^i\wedge d w^j,
\eeaa
where $\omega_{ij}$ is independent of $\lambda$.
Evidently,
\bea
\omega\wedge\omega=0,\quad d^{w}\omega=0,
\label{2formprop}
\eea
and for $\omega_{ij}$ we get the relations obtained
in \cite{KS}.

Second relation in (\ref{2formprop}) implies
the existence of potential (see \cite{KS}
for more detail)
\beaa
\omega_{ij}=\Theta_{x^i y^j} - \Theta_{x^j y^i},
\eeaa
and first relation (\ref{2formprop}) gives equation
(\ref{TED0})
for potential $\Theta$.
\subsection*{Connection with six-dimensional
heavenly equation. The hierarchy framework}
Considering more general wave functions $\phi^i$
containing higher times
\bea
\phi^i=x^i+\lambda y^i+
\sum_{n=2}^N t^i_n\lambda^n
\label{phiH}
\eea
and treating $q$, $p$ in wave functions (\ref{Psi})
as independent variables, we put ourselves to the 
context of multidimensional extension of
the six-dimensional heavenly equation hierarchy
considered in \cite{BP19}
(which is in turn a reduction of general dispersionless
hierarchy \cite{LVB09}) . Generating relation 
(\ref{Gen}) preserves its form for functions 
(\ref{phiH}) and extended set of independent variables,
\beaa
\Omega_-=\left((d\Psi ^{1}\wedge d\Psi ^{2})
\wedge (d\phi^{1}
\wedge d\phi
^{2}\wedge d\phi ^{4}
\wedge d\phi ^{4})\right)_-=0,
\eeaa
First, this relation implies that
$u=-\Theta_p$, $v=\Theta_q$,
and potential $\Theta$ satisfies six-dimensional
heavenly equations
\begin{equation}
\Theta_{x^i y^j} - \Theta_{x^j y^i} = \{ \Theta_{x^i}, \Theta_{x^j}\}_{(q,p)} , 
\label{6DH}
\end{equation}
with the Lax pair
\bea
\begin{aligned}
D_i \Psi +\{\Theta_{x^i},\Psi\}_{(q,p)}&=0,
\\
D_j \Psi +\{\Theta_{x^j},\Psi\}_{(q,p)}&=0,
\end{aligned}
\label{6DHLax}
\eea
for any pair of distinct $i,j\in 1,2,3,4$.
Here we use the Poisson bracket          
$$
\{ f_1, f_2\}_{(q,p)} := \frac{\partial f_1}{\partial q} 
\frac{\partial f_2}{\partial p}  -  \frac{\partial f_1}{\partial p} 
\frac{\partial f_2}{\partial q}. 
$$
Taking into account that $u=-\Theta_p$, $v=\Theta_q$ 
and using six-dimensional heavenly equations
we get
\beaa
((\p_{x^i} u)(\p_{x^j} v)- (\p_{x^j} u)(\p_{x^i} v))=
\{ \Theta_{x^i}, \Theta_{x^j}\}_{(q,p)}=
\Theta_{x^i y^j} - \Theta_{x^j y^i},
\eeaa
thus providing
another proof that vector fields (\ref{vf}) take the form corresponding
to the Lax pair (\ref{LaxTed0}). Vector fields
(\ref{vf}) can be obtained as combinations
of three vector fields of the form (\ref{6DHLax})
with eliminated Hamiltonian part. And equation
(\ref{TED0}) may be considered as a kind
of superposition principle (intertwining equation)
for a set of consistent six-dimensional heavenly equations
(\ref{6DH}) with different $i,j\in 1,2,3,4$.
\section{Dressing scheme and special solutions}
\label{SecDress}
Dressing scheme to construct solutions of generating
relations of the type (\ref{Gen}) was inroduced
in \cite{BP19} as a reduction of a dressing scheme
for general multidimensional dispersionless
hierarchies \cite{LVB09}.
It is formulated in terms of Riemann-Hilbert
problem on the unit circle (or the boundary of some
region $G$)
\beaa
\Psi ^{1}_\text{in} =
F^1(\lambda,\Psi ^{1},\Psi^2;
\phi^{1},\phi^{2},\phi^{3},\phi^{4})_\text{out},
\\
\Psi ^{2}_\text{in} =
F^2(\lambda,\Psi ^{1},\Psi^2;
\phi^{1},\phi^{2},\phi^{3},\phi^{4})_\text{out},
\eeaa
where the diffeomorphism defined by $F_1$, $F_2$
for the case of Hamiltonian reduction should be
area-preserving with respect to the variables
$\Psi^1$, $\Psi^2$.
Alternatively, it is possible to use the 
$\dbar$ problem in the unit disk (or
some region $G$)
\bea
\begin{aligned}
\dbar\Psi ^{1}& =
W_{,2}(\lambda,\bar\lambda,\Psi ^{1},\Psi^2;
\phi^{1},\phi^{2},\phi^{3},\phi^{4}),\quad &W_{,2}&:=\frac{\p W}{\p \Psi^2},
\\
\dbar \Psi ^{2}& =
-W_{,1}(\lambda,\bar\lambda, \Psi ^{1},\Psi^2;
\phi^{1},\phi^{2},\phi^{3},\phi^{4}),\quad
&W_{,1}&:=\frac{\p W}{\p \Psi^1}.
\end{aligned}
\label{dbar}
\eea
Here the Hamiltonian reduction is taken into account 
explicitly. We search for the solutions of the form
\beaa
\Psi^1=q+ \tilde \Psi^1,\quad \Psi^2=p+ \tilde \Psi^2
\eeaa
where $\tilde \Psi^1$, $\tilde \Psi^2$ are analytic outside $G$
and go to zero at infinity. The series for these
functions at infinity give a solution to generating
relation (\ref{Gen}) at infinity. The form $\Omega$
defined through these functions due to the problem (\ref{dbar})
is analytic in
the complex plane (may be meromorphic in more general 
setting).

The functional freedom of
the dressing data consists of functions of 7 variables,
that corresponds to the functional freedom
for general solution of  equation (\ref{TED0}),
indicating eight-dimensional integrability of equation
(\ref{TED0}).
Below we will consruct a class of solutions for 
equation (\ref{TED0}). 
We will
go along the lines of similar calculations for
general heavenly equation 
presented in \cite{LVB15} and 
for six-dimensional heavenly equation in \cite{BP19}.
\subsection*{Special solutions}
A class of solutions for  equation 
(\ref{TED0})
in terms of implicit functions (similar 
to solutions of hyper-K\"ahler hierarchy
\cite{Gindikin86},
\cite{Takasaki89})
can be constructed using the choice
\beaa
&&
\frac{1}{2\pi\mathrm{i}}W(\lambda,\bar \lambda,\Psi^1,\Psi^2;\phi^1,\phi^2,\phi^3,\phi^4)=
\\
&&\quad
=\sum_{i=1}^{M}\delta(\lambda-\mu_i)
F_i(\Psi^1;\phi^1,\phi^2,\phi^3,\phi^4)
+
\sum_{i=1}^{M}\delta(\lambda-\nu_i)
G_i(\Psi^2;\phi^1,\phi^2,\phi^3,\phi^4),
\eeaa
where $\delta(\lambda-\mu_i)$, $\delta(\lambda-\nu_i)$ are two-dimensional
delta functions in the complex plane 
characterized
by the relation $\dbar \lambda^{-1}=2\pi\mathrm{i}\delta(\lambda)$
(the constant defines normalization of two-dimensional
delta function),
and $F_i$, $G_i$ are 
arbitrary
(complex-analytic) functions of
three variables. The $\dbar$ problem (\ref{dbar}) in this case reads
\bea
\begin{aligned}
\dbar \Psi^1&=2\pi\mathrm{i}\sum_{i=1}^{M}
\delta(\lambda-\nu_i)G'_i(\Psi^2;\phi^1,\phi^2,\phi^3,\phi^4),
\quad &G'_i&=\frac{\p G_i}{\p \Psi^2},
\\
\dbar \Psi^2&=-2\pi\mathrm{i}\sum_{i=1}^{M}
\delta(\lambda-\mu_i)F'_i(\Psi^1;\phi^1,\phi^2,\phi^3,\phi^4),
\quad &F'_i&=\frac{\p F_i}{\p \Psi^1},
\end{aligned}
\label{dbardelta}
\eea
where, due to delta functions, $G'_i$, $F'_i$
are taken at $\lambda=\nu_i$, $\lambda=\mu_i$, 
and inverting operator $\dbar$, we get
\bea
\begin{aligned}
\Psi^1-q&=\sum_{i=1}^{M}
(\lambda-\nu_i)^{-1}
G'_i(\Psi^2(\nu_i);\phi^1(\nu_i),\phi^2(\nu_i),\phi^3(\nu_i),\phi^4(\nu_i)),
\\
\Psi^2-p&=-\sum_{i=1}^{M}
(\lambda-\mu_i)^{-1}
F'_i(\Psi^1(\mu_i);\phi^1(\mu_i),\phi^2(\mu_i),\phi^3(\mu_i),\phi^4(\mu_i)).
\end{aligned}
\label{dbardelta1}
\eea
The solutions of the $\dbar$ problem are then of
the form
\bea
\Psi^1=q+\sum_{i=1}^{M} \frac{f_i}{\lambda-\nu_i}, \quad  
\Psi^2=p+\sum_{i=1}^{M} \frac{g_i}{\lambda-\mu_i},
\label{psi}
\eea
and from (\ref{dbardelta1}) the functions $f_i$, $g_i$ are defined as implicit functions,
\be
\begin{aligned}
f_i(\mathbf{x},\mathbf{y})&=G'_i\left(p+
\sum_{k=1}^{M} \frac{g_k(\mathbf{x},\mathbf{y})}{\nu_i-\mu_k};
\phi^1(\nu_i),\phi^2(\nu_i),\phi^3(\nu_i),\phi^4(\nu_i)
\right),
\\
g_i(\mathbf{x},\mathbf{y})&=-F'_i\left(q+
\sum _{k=1}^{M}\frac{f_k(\mathbf{x},\mathbf{y})}{\mu_i-\nu_k};
\phi^1(\mu_i),\phi^2(\mu_i),\phi^3(\mu_i),\phi^4(\mu_i)
\right),
\end{aligned}
\label{impl}
\ee 
where equations (\ref{impl}) represent a closed
system of $2M$ equations for $2M$ functions
$f_i$, $g_i$, defining them as functions
of $\mathbf{x}$, $\mathbf{y}$, $p$, $q$.
The potential $\Theta$ 
is then given by the general formula (see \cite{BK05})
\bea
\Theta(\mathbf{x},\mathbf{y})=
\frac{1}{2\pi\mathrm{i}}\iint_{G}
\Bigl(\wt \Psi^2 \dbar \wt \Psi^1
-
W(\lambda,\bar \lambda,\Psi^1,\Psi^2;\phi^{1},\phi^{2},\phi^{3},\phi^{4})
\Bigr) 
d\lambda\wedge d\bar \lambda,
\label{HEtau}
\eea
it depends on the set
of arbitrary functions of {\em five variables} $F_i$, $G_i$,
\begin{multline} 
\Theta(\mathbf{x},\mathbf{y})=
\sum_{i=1}^{M} 
F_i(\Psi^1;\phi^1,\phi^2,\phi^3,\phi^4)
\bigr|_{\lambda=\mu_i}
+
\sum_{i=1}^{M} 
G_i(\Psi^2;\phi^1,\phi^2,\phi^3,\phi^4)
\bigr|_{\lambda=\nu_i}
\\
+\sum_{i=1}^{M}\sum_{j=1}^{M} \frac{f_i g_j}{\nu_i - \mu_j},
\label{Theta}
\end{multline} 
where $\Psi^1$, $\Psi^2$ are given by
(\ref{psi}), $\phi^1$, $\phi^2$, $\phi^3$, $\phi^4$  
are of the form
(\ref{phi})
and the functions $f_i$, $g_i$ are defined as implicit functions 
by equations (\ref{impl}). Formula (\ref{Theta}) corresponds
to the special solution of hyper-K\"ahler hierarchies 
presented
in \cite{Takasaki89}, however, it is
important to note that in our case the solution depends
on the set of arbitrary functions of {\em five variables},
in contrast to the set of functions of one variable
in \cite{Takasaki89} and functions of
three variables in \cite{BP19}.
\begin{rem}
It is possible to prove directly that ansatz
(\ref{psi}), taking into account relations (\ref{impl}),
gives a solution to generating equation (\ref{Gen}).
\end{rem}
\subsection*{From equation (\ref{TED0}) to the 
general heavenly equation}
The connection between equation (\ref{TED0})
and four-dimensional general heavenly equation
is described in \cite{KS} in a simple and elegant
way, namely as a travelling way reduction $\p_{y^i}=
\lambda_i\p_{x_i}$. It is remarkable that this reduction
can be rather easily performed in terms of the dressing
data and, specifically, for the special solution of
the type (\ref{Theta}), after some minor modification.
This reduction gives a general way of introducing
vertex variable (corresponding to simple pole
at some point) instead of a pair of variables of 
$x,y$, thus providing different types of
generating equations for lower-dimensional systems.

First, we slightly modify the definition of the wave functions (\ref{Psi}),
\bea
\begin{aligned}
\Psi^1&=q + 
\sum_1^4 a_i y_i + \wt\Psi^1,
\\
\Psi^2&=p + 
\sum_1^4 b_i y_i + \wt\Psi^2
\end{aligned}
\label{Psimod}
\eea
leaving the functions $\phi^i$ intact. It is easy to check that this modification leads to equation (\ref{TED0}) 
and six-dimensional heavenly equations
for the potential $\Theta$ containing vacuum terms,
\beaa
\Theta=\Theta_0+ \wt\Theta,
\quad \Theta_0=\tfrac{1}{2}\sum(a_ib_j-a_jb_i)x^iy^j +
\sum (a_i p x^i - b_i q x^i).
\eeaa 
To perform a transition from 8-dimensional to 4-dimensional
case, let us consider $\dbar$-data of the form
\bea
W(\lambda,\bar\lambda,\Psi ^{1},\Psi^2;
\phi^{1},\phi^{2},\phi^{3},\phi^{4})=
F\left(\lambda,\bar\lambda,
\Psi ^{1}+\sum \frac{a_i\phi^i}{\lambda-\lambda_i},
\Psi ^{2}+\sum \frac{b_i\phi^i}{\lambda-\lambda_i}\right),
\label{dwave}
\eea
where $F$ vanishes in the neighborhoods of infinity
and points $\lambda=\lambda_i$.
First, these data evidently correspond to some solution of
equation (\ref{TED0}) with the vacuum background, and, taking into
acount that
\begin{multline}
F\left(\lambda,\bar\lambda,
\Psi ^{1}+\sum \frac{a_i\phi^i}{\lambda-\lambda_i},
\Psi ^{2}+\sum \frac{b_i\phi^i}{\lambda-\lambda_i}\right)
\\
=
F\left(\lambda,\bar\lambda,
\wt\Psi ^{1}+
\sum \frac{a_i(x^i-\lambda_i y^i)}{\lambda-\lambda_i},
\wt\Psi ^{2}+
\sum \frac{b_i(x^i-\lambda_i y^i)}{\lambda-\lambda_i}\right)
\label{F4}
\end{multline}
the solution is of the form
\bea
\Theta=\sum(a_ib_j-a_jb_i)x^iy^j +
\wt\Theta(x^1-\lambda_1 y^1,x^2-\lambda_2 y^2,
x^3-\lambda_3 y^3, x^4-\lambda_4 y^4),
\label{TEDrun}
\eea
where $\wt\Theta$ corresponds to the travelling waves reduction. Then four-dimensional potential
(with the same $\wt\Theta$)
\bea
\Theta=
\tfrac{1}{2}\sum_{i \neq j}\frac{(a_ib_j-a_jb_i)x^ix^j}{\lambda_i-\lambda_j} +
\wt\Theta(x^1,x^2,
x^3, x^4),
\label{Genrun}
\eea
satisfies general heavenly equation
\begin{multline}
(\lambda_2-\lambda_3)(\lambda_1-\lambda_4)\Theta_{x^2 x^3}\Theta_{x^1 x^4}
\\
+(\lambda_1-\lambda_3)(\lambda_4-\lambda_2)
\Theta_{x^1 x^3}\Theta_{x^4 x^2}
\\
+
(\lambda_1-\lambda_2)(\lambda_3-\lambda_4)
\Theta_{x^1 x^2}\Theta_{x^3 x^4}=0.
\label{GH}
\end{multline}
\begin{Rem}
Dressing data of the form (\ref{F4})
(in terms of wave variables)
exactly correspond
to the dressing scheme developed in \cite{LVB15}
for general heavenly equation. And in the process
of reduction from equation (\ref{TED0}) to 
general heavenly equation (\ref{GH}) we obtain the same vacuum term that was obtained in \cite{LVB15} directly
from the dressing scheme for general heavenly equation.
\end{Rem}
To get a solution 
of the form (\ref{Theta}) corresponding
to the travelling wave reduction, we should use
modified wave functions (\ref{Psimod})
and the dressing data of the form
\begin{multline}
\frac{1}{2\pi\mathrm{i}}W(\lambda,\bar \lambda,\Psi^1,\Psi^2;\phi^1,\phi^2,\phi^3,\phi^4)=
\\
=\sum_{k=1}^{M}\delta(\lambda-\mu_k)
F_k\bigl(\Psi ^{1}+\sum \tfrac{a_i\phi^i}{\lambda-\lambda_i}\bigr)
+
\sum_{k=1}^{M}\delta(\lambda-\nu_k)
G_k\bigl(\Psi ^{2}+\sum \tfrac{b_i\phi^i}{\lambda-\lambda_i}\bigr),
\label{swave}
\end{multline}
where $F_k$, $G_k$ are now (complex-analytic) functions
of one variable. These dressing data correspond 
to solution of equation (\ref{TED0}) of
the form (\ref{TEDrun}), where the potential
$\wt\Theta$ is given by fomulae (\ref{HEtau}),
(\ref{Theta}) (taking into account
minor modifications of 
wave functions (\ref{psi}) and equations (\ref{impl})
corresponding to expressions (\ref{Psimod})).

\section{$2N+2N$-dimensional generalisation}
\label{SecN}
Let us briefly discuss how to generalise
the presented results to multidimensions.
It was demonstrated in \cite{KS} that, similar to
the case of general heavenly equation \cite{LVB15},
equation (\ref{TED0}) can be generalized to multidimensional case. 
This $2N+2N$-dimensional equation has the form \cite{KS}
\be
\epsilon^{i_1\cdots i_{2N}}\omega_{i_1i_2}\cdots\omega_{i_{2N-1}i_{2N}} = 0
\label{TEDN}
\ee
or $\pf(\omega)=0$, where $\pf(\omega)$ is
a Pfaffian and $\omega$ is
$2N$$\times$$2N$ skew-symmetric matrix 
with the entries
\bea
\omega_{ij}=\Theta_{x^i y^j} - \Theta_{x^j y^i}.
\label{potN}
\eea
Another elegant
way to describe equation (\ref{TEDN})
it is to use the differential two-form $\omega$,
$$
\omega=\omega_{ij}(\mathbf{x},\mathbf{y})dw^i\wedge dw^j,
$$
where coefficients
$\omega_{ij}$ are independent of $\lambda$.
For 4+4-dimensional case equation (\ref{TED0}) is 
equivalent to conditions (\ref{2formprop})
for the form $\omega$. Equation (\ref{TEDN})
is equivalent to the conditions
\bea
\wedge^N\omega=0,\quad d^{w}\omega=0,
\label{2formpropN}
\eea
where second condition, independent of dimensionality,
implies the existence of potential $\Theta$ (\ref{potN}),
and first condition gives equation (\ref{TEDN})
for the potential generalising equation
(\ref{TED0}) to multidimensions (see \cite{KS} for
more detail).

The analogue of generating relation (\ref{Gen})
for this case is
\begin{multline}
\Omega_-=\bigl((d\Psi ^{1}\wedge d\Psi ^{2}+
d\Psi ^{3}\wedge d\Psi ^{4}
+\dots+d\Psi ^{2N-3}\wedge d\Psi ^{2N-2})
\\
\wedge (d\phi^{1}
\wedge d\phi
^{2}\wedge \dots\wedge
d\phi ^{2N})\bigr)_-=0,
\label{GenN}
\end{multline}
where functions $\phi^i$ are of the form 
(\ref{phi}) and 
the series the functions 
$\Psi^i$ are of the type (\ref{Psi}),
\beaa
\begin{aligned}
\Psi^{2k-1}&=q^k + \wt\Psi^{2k-1},\quad
&\wt\Psi^{2k-1}&=
\sum_{n=1}^\infty \Psi^{2k-1}_n(p,q,\mathbf{x},\mathbf{y})
\lambda^{-n},
\\
\Psi^{2k}&=p^k + \wt\Psi^{2k},\quad
&\wt\Psi^{2k}&=\sum_{n=1}^\infty \Psi^{2k}_n(p,q,\mathbf{x},\mathbf{y})
\lambda^{-n},\quad 1\leqslant k\leqslant N-1.
\end{aligned}
\eeaa
The analogue of two-form $\omega$ (\ref{omega})
is
\beaa
\omega=d^w\Psi ^{1}\wedge d^w\Psi ^{2}+
d^w\Psi ^{3}\wedge d^w\Psi ^{4}
+\dots+d^w\Psi ^{2N-3}\wedge d^w\Psi ^{2N-2},
\eeaa
and the second form of generating
relation reads
\bea
\omega_-=(d^w\Psi ^{1}\wedge d^w\Psi ^{2}+
d^w\Psi ^{3}\wedge d^w\Psi ^{4}
+\dots+d^w\Psi ^{2N-3}\wedge d^w\Psi ^{2N-2})_-=0.
\label{GenN2}
\eea
Generating relation (\ref{GenN2}) imlies that
$$
\omega=\omega_{ij}(\mathbf{x},\mathbf{y})dw^i\wedge dw^j,
$$
where coefficients
$\omega_{ij}$ are independent of $\lambda$, and
equations (\ref{2formpropN}) for this two-form.

Dressing scheme (\ref{dbar})
requires an obvious modification
\bea
\begin{aligned}
&\dbar\Psi ^{2k-1} =
W_{,2k}(\lambda,\bar\lambda,\Psi ^{1},\dots,\Psi^{2N-2};
\phi^{1},\dots,\phi^{2N}),
\quad W_{,2k}:=\frac{\p W}{\p \Psi^{2k}},
\\
&\dbar\Psi ^{2k} =
W_{,2k-1}(\lambda,\bar\lambda,\Psi ^{1},\dots,\Psi^{2N-2};
\phi^{1},\dots,\phi^{2N}),
\quad W_{,2k-1}:=\frac{\p W}{\p \Psi^{2k-1}},
\end{aligned}
\label{dbarN}
\eea
where $1\leqslant k\leqslant N-1$. Calculation of special solutions of the type (\ref{Theta}) is
completely analogous. We 
start from the dressing data
\begin{multline*}
\frac{1}{2\pi\mathrm{i}}W=
\sum_{i=1}^{M}\delta(\lambda-\mu_i)
F_i(\Psi^1, \Psi^3, \dots, \Psi^{2N-3};
\phi^1,\dots,\phi^{2N})
\\
+
\sum_{i=1}^{M}\delta(\lambda-\nu_i)
G_i(\Psi^2,\Psi^4,\dots,\Psi^{2N-2};\phi^1,\dots,\phi^{2N}),
\end{multline*}
and solutions to the $\dbar$ problem (\ref{dbarN})
are of the form
\bea
\Psi^{2k-1}=q^k+\sum a^k_i y_i+
\sum_{i=1}^{M} \frac{f_i^k}{\lambda-\nu_i}, \quad  
\Psi^2=p^k+\sum b^k_i y_i+
\sum_{i=1}^{M} \frac{g_i^k}{\lambda-\mu_i},
\label{psiN}
\eea
where we take into account vacuum terms.
The functions $f_i^k$, $g_i^k$ are defined as implicit functions by the relations
\be
\begin{aligned}
f_i^k(\mathbf{x},\mathbf{y})&=G_{i,2k}
(\Psi^2,\Psi^4,\dots,\Psi^{2N-2};
\phi^1,\dots,\phi^{2N})\bigr|_{\lambda=\nu_i},
\\
g_i^k(\mathbf{x},\mathbf{y})&=-F_{i,2k-1}
(\Psi^1, \Psi^3, \dots, \Psi^{2N-3};
\phi^1,\dots,\phi^{2N})\bigr|_{\lambda=\mu_i},
\end{aligned}
\label{implN}
\ee
where equations (\ref{implN}) represent a closed
system of $2M(N-1)$  equations for $2M(N-1)$ functions
$f_i^k$, $g_i^k$.
The potential $\Theta$ contains a sum of vacuum 
and regular terms,
\bea
\Theta=\Theta_0+ \wt\Theta,
\quad \Theta_0=\tfrac{1}{2}\sum_{i,k}(a_i^k b_j^k-a_j^k
b_i^k)x^iy^j +
\sum_{i,k} (a_i^k p^k x^i - b_i^k q^k x^i).
\label{THETA}
\eea 
The regular term is defined by multidimensional 
extension of general formula (\ref{HEtau}),
\begin{multline}
\wt\Theta(\mathbf{x},\mathbf{y})=
\\
=
\iint_{G}\frac{d\lambda\wedge d\bar \lambda}{2\pi\mathrm{i}}
\Bigl(\sum_k\wt \Psi^{2k} \dbar \wt \Psi^{2k-1}
-
W(\lambda,\bar \lambda,\Psi^1,\dots,
\Psi^{2N-2};\phi^{1},\dots,\phi^{2N})
\Bigr) 
,
\label{HEtauN}
\end{multline}
and extension of formula (\ref{Theta}) reads
\begin{multline} 
\wt\Theta(\mathbf{x},\mathbf{y},\mathbf{p},\mathbf{q})=
\sum_{i=1}^{M} 
F_i
(\Psi^1, \Psi^3, \dots, \Psi^{2N-3};
\phi^1,\dots,\phi^{2N})
\bigr|_{\lambda=\mu_i}
\\
+
\sum_{i=1}^{M} 
G_i
(\Psi^2,\Psi^4,\dots,\Psi^{2N-2};
\phi^1,\dots,\phi^{2N})
\bigr|_{\lambda=\nu_i}
+\sum_{k=1}^{N-1}
\sum_{i=1}^{M}\sum_{j=1}^{M} \frac{f_i^k g_j^k}{\nu_i - \mu_j},
\label{ThetaN}
\end{multline}
the potential depends on the set of arnitrary functions
of $3N-1$ variables.

The travelling wave reduction $\p_{y^i}=\lambda_i\p_{x^i}$
for some pair of variables $x^i$, $y^i$ corresponds to 
a special dependence of the $\dbar$ data on the function
$\phi^i$, when it enters the data only in the combination
with functions $\Psi^k$, namely
$\Psi ^{2k-1}+\sum \frac{a_i^k\phi^i}{\lambda-\lambda_i}$,
$\Psi ^{2k}+\sum \frac{b_i^k\phi^i}{\lambda-\lambda_i}$
(compare (\ref{dwave}), (\ref{swave})). The
travelling wave reduction for all pairs $x^i$, $y^i$ 
leads to the solution of $2N$-dimensional extension
of general heavenly equation \cite{LVB15}.

The analogues of six-dimensional heavenly equations
are now $2+2N$-dimensional,
\begin{equation}
\Theta_{x^i y^j} - \Theta_{x^j y^i} = \{ \Theta_{x^i}, \Theta_{x^j}\}_{(\mathbf{q,p})} , 
\label{6DHN}
\end{equation}
with the Lax pair
\beaa
\begin{aligned}
D_i \Psi +\{\Theta_{x^i},\Psi\}_{(\mathbf{q,p})}&=0,
\\
D_j \Psi +\{\Theta_{x^j},\Psi\}_{(\mathbf{q,p})}&=0,
\end{aligned}
\eeaa
where the Poisson bracket is         
$$
\{ f_1, f_2\}_{(\mathbf{q,p})} := 
\sum_{k=1}^{N-1}\frac{\partial f_1}{\partial q^k} 
\frac{\partial f_2}{\partial p^k}  -  \frac{\partial f_1}{\partial p^k} 
\frac{\partial f_2}{\partial q^k}. 
$$
Potentials given by expressions (\ref{THETA}),
(\ref{ThetaN}) provide special solutions to 
equations (\ref{6DHN}).

Generaly, $2N+2N$-dimensional generalisation of integrable
structures connected with equation (\ref{TED0})
is similar to generalisation of the second heavenly equation
hierarchy to hyper-K\"ahler hierarchy
\cite{Takasaki89}.
\subsection*{Acknowledgements}
The work of LVB was performed in the framework of
State assignment topic 0033-2019-0006 (Integrable
systems of mathematical physics).

\end{document}